\newtheorem{thm}{Theorem}
\begin{document}

\begin{center}
\LARGE\bf Multipartite entanglement based on realignment moments
\end{center}

\begin{center}
\rm Hui Zhao,$^{*1}$\ Shu-Ying Zhuang,$^{1}$ \ Naihuan Jing,$^{2}$ \ Mei-Ming Zhang$^{3}$
\end{center}

\begin{center}
\begin{footnotesize} \sl
$^1$ School of Mathematics, Statistics and Mechanics, Beijing University of Technology, Beijing 100124, China

$^2$ Department of Mathematics, North Carolina State University, Raleigh, NC 27695, USA

$^3$ School of Mathematics and Physics, Jiangsu University of Technology, Changzhou ,213001, China

\end{footnotesize}
\end{center}
\footnotetext{\\{*}Corresponding author, zhaohui@bjut.edu.cn}
\vspace*{2mm}

\begin{center}
\begin{minipage}{15.5cm}
\parindent 20pt\footnotesize
Based on the realignment moments of density matrix, we study parameterized entanglement criteria for bipartite and multipartite states. By adjusting the different parameter values, our criterion can detect not only  bound entangled states, but also non-positive partial transpose entangled states for bipartite quantum systems. Moreover we propose the definition of multipartite realignment moments and generalize the result of bipartite systems to obtain a sufficient criterion to detect entanglement for multipartite quantum states in arbitrary dimensions. And we further improve the conclusion to obtain another new entanglement criterion. The new method can detect more entangled states than previous methods as backed by detailed examples.
\end{minipage}
\end{center}

\begin{center}
\begin{minipage}{15.5cm}
\begin{minipage}[t]{2.3cm}{\bf Keywords:}\end{minipage}
\begin{minipage}[t]{13.1cm}
Quantum entanglement, Realignment moments, Bound entangled states
\end{minipage}\par\vglue8pt
\end{minipage}
\end{center}

\section{Introduction}
Quantum entanglement \cite{ref1} plays an essential role in quantum information processing such as quantum computation \cite{ref2}, quantum cryptography \cite{ref3,ref4} and quantum teleportation \cite{ref5}. Due to moments acting as practical tools in estimating some important properties of quantum systems, and the measurement of these moments is relatively easy,
many researchers have focused on how to use moments to detect entanglement effectively in recent years.

Positive partial transpose (PPT) criterion \cite{ref6} and matrix realignment criterion\cite{ref7,ref8} are two important entangled criteria. The PPT criterion is necessary and sufficient for the separable quantum states in $2\otimes2$ and $2\otimes3$ systems\cite{ref9}. In higher dimensional system, it has been found that there exist states which are PPT states but entangled, which are known as bound entangled states\cite{ref10}. The matrix realignment criterion illustrates that for any bipartite separable state, the trace norm of the realigned matrix is not greater than $1$. Realignment criteria can detect both non-positive partial transpose (NPT) entangled states and bound entangled states \cite{ref7,ref11}, but PPT criterion is able to detect only NPT entangled states.

Many approaches using moments to detect entanglement were developed. Elben et al. \cite{ref12} showed the partially transposed moments can be used to define a simple yet powerful test for bipartite entanglement. Neven et al. \cite{ref13} proposed a set of inequality conditions using partially transposed moments to detect entanglement. Yu et al. \cite{ref14} introduced an optimal entanglement detection criterion based on partial moments. Wang et al.\cite{ref15} proposed the necessary and sufficient conditions for separability of quantum states based on the moments with respect to any positive map, which can detect the bound entanglement states. Zhang et al. \cite{ref16} proposed a entanglement criterion using realignment moments. Then Aggarwal et al.\cite{ref17} proposed a criterion based on realignment moments to detect NPT and bound entangled states. In \cite{ref18}, the authors introduced a criterion using realignment moments to detect bipartite entangled states, which simplifies the related experimental measurements. In \cite{ref19} the authors presented a entanglement criterion by estimating the bound of realignment moments, and it can detect NPT and bound entangled states. Most criteria using realignment moments were focused on the detection of bipartite entangled states. However multipartite systems exhibit rich physical phenomena and application potential. Due to the extremely complex structure of entanglement in multipartite quantum states, the studies on multipartite entanglement are still far from being satisfied. We consider presenting the definition of multipartite realignment moments and using them to study the entanglement criteria of multipartite systems.

In this paper, we present parameterized entanglement criteria using realignment moments for bipartite and multipartite systems. The paper is organized as follows. In Section 2, we propose parameterized entanglement criterion based on realignment moments for bipartite systems. By adjusting the parameter, our bipartite criterion can detect more bound and NPT entangled states which are failed to detect by other criteria. In Section 3, we present the definition of multipartite realignment moments and derive useful and operational criteria to detect multipartite entangled states. By detailed examples, our results can detect more entangled states. Conclusions are given in Section 4.

\section{Bipartite entanglement criterion with parameter based on moments}
We first review some concepts about the realignment. Let $A=[a_{ij}]$ be an $m\times n$ matrix over the complex field $\mathbb{C}$, the vectorization of matrix $A$ is defined as
\begin{equation}
vec(A)=\left(a_{11},\cdots,a_{m1},a_{12},\cdots,a_{m2},\cdots,a_{1n},\cdots,a_{mn}\right)^\mathrm{T},
\end{equation}
where $\mathrm{T}$ denotes the transpose.
Let $\rho$ be a bipartite state over an $m\otimes n$ dimensional system represented as an $m\times m$ block matrix:
\begin{equation}
A(\rho)=[A_{ij}]_{m\times m},
\end{equation}
where $A_{ij}$ are $n\times n$ matrices. In fact, let $|ik\rangle$ be the computational basis elements of $\mathbb C^m\otimes\mathbb C^n$,
then the block matrix entry $A_{ij}$ is given by $(A_{ij})_{kl}=\langle ik|\rho|jl\rangle$, $1\leq k, l\leq n$.
We define the {\it realignment matrix} $\rho^{R}$ as the $m^2\times n^2$ matrix obtained by the realignment operation $R$:
$$\rho^{R}=(vec(A_{11}),\cdots,vec(A_{m1}),\cdots,vec(A_{1i}),\cdots,$$
\begin{equation}
vec(A_{mi}),\cdots,vec(A_{1m}),\cdots,vec(A_{mm}))^\mathrm{T}.
\end{equation}
The realignment criterion \cite{ref7} says that $\|\rho^{R}\|\leq1$ for any bipartite separable state $\rho$. Here $\|\cdot\|$ denotes the trace norm.

The realignment moments are the following traces: 
\begin{equation}
T_{k}=Tr\left[\left(\rho^{R\dagger}\rho^{R}\right)^k\right], k=1,2,\cdots,mn,
\end{equation}
where $\dagger$ denotes conjugate transpose. Suppose $\sigma_{1},\sigma_{2},\cdots,\sigma_{s}$ are the $s$ nonzero singular values of $\rho^{R}$, then
the realignment moments are given by $T_k=\sum_i\sigma_i^{2k}$. In particular,
\begin{align}
T_{1}&=Tr\left(\rho^{R\dagger}\rho^{R}\right)=\sum_{i=1}^{s}\sigma_i^2\label{T1},\\
T_{2}&=Tr\left[\left(\rho^{R\dagger}\rho^{R}\right)^{2}\right]=\sum_{i=1}^{s}\sigma_i^4. \label{T2}
\end{align}

Then we can formulate an entanglement criterion based on the realignment moments as follows.
\begin{thm}\label{thm1}
Suppose $\rho$ is a separable bipartite state, let $\Delta=(T_{1}^{2}-T_{1})^2-2(T_{1}^{2}-T_{2})T_1^2$ and
\begin{align}
V_1(a)=\sqrt{\frac{2}{a}\left[\left(1+\frac{a}{2}\right)T_{1}+\sqrt{\left(T_1^2-T_2\right)\frac{a^2}{2}+
\left(T_1^2-T_1\right)a+T_1^2}\right]}, \label{t14}
\end{align}
then we have that\\
(i) if $\Delta\leq0$, then $V_1(a)\leq1$ for $a>0$;\\
(ii)if $\Delta>0$,  then $V_1(a)\leq1$ for $0<a\leq\frac{T_1-T_1^2-\sqrt{\Delta}}{T_1^2-T_2}$ or $a\geq\frac{T_1-T_1^2+\sqrt{\Delta}}{T_1^2-T_2}$.
\end{thm}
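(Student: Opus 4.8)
The plan is to reduce the assertion $V_1(a)\le 1$ to the single parameter-free inequality $2(T_1^{2}-T_2)\le(1-T_1)^{2}$, to derive that inequality from the realignment criterion, and then to account for the hypotheses on $a$ by a sign analysis of the quadratic under the inner radical of $V_1(a)$. First I would invoke the realignment criterion recalled above: since $\rho$ is separable, $\|\rho^{R}\|=\sum_{i=1}^{s}\sigma_i\le 1$. Writing $p=\sum_{i=1}^{s}\sigma_i$, the identities $p^{2}-T_1=\sum_{i\ne j}\sigma_i\sigma_j$ and $T_1^{2}-T_2=\sum_{i\ne j}\sigma_i^{2}\sigma_j^{2}$, combined with the elementary bound $\sum_k x_k^{2}\le(\sum_k x_k)^{2}$ for nonnegative reals (applied with $x_k=\sigma_i\sigma_j$, $i<j$), give $T_1^{2}-T_2\le\tfrac12(p^{2}-T_1)^{2}$. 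Since $p^{2}\ge T_1$ and $p\le 1$ force $0\le p^{2}-T_1\le 1-T_1$, this yields the key inequality $2(T_1^{2}-T_2)\le(1-T_1)^{2}$. In particular $T_1\le 1$, $T_1^{2}\ge T_2$, and $\Delta=T_1^{2}\big[(1-T_1)^{2}-2(T_1^{2}-T_2)\big]\ge 0$, so a separable state always falls under case (ii) except in the borderline situation $\Delta=0$.

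Next I would show that the key inequality forces $V_1(a)\le 1$. Squaring the definition of $V_1(a)$ and multiplying by $a/2>0$ turns $V_1(a)\le 1$ into an inequality of the shape $\sqrt{q(a)}\le L(a)$, where $q(a)=(T_1^{2}-T_2)\tfrac{a^{2}}{2}+(T_1^{2}-T_1)a+T_1^{2}$ is the radicand and $L(a)$ is an explicit linear function of $a$ with coefficients depending only on $T_1$. After checking that $L(a)\ge 0$ one squares once more; the $T_1^{2}$ terms cancel, the terms linear in $a$ combine, and the inequality collapses to $\tfrac{a}{4}\big(2(T_1^{2}-T_2)-(1-T_1)^{2}\big)\le 2T_1(1-T_1)$, whose left-hand side is $\le 0$ by the key inequality while its right-hand side is $\ge 0$. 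Hence $V_1(a)\le 1$ holds whenever $V_1(a)$ is real.

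Finally, the hypotheses on $a$ are exactly the condition $q(a)\ge 0$ that makes $V_1(a)$ real. The leading coefficient $\tfrac12(T_1^{2}-T_2)$ of $q$ is nonnegative, and the discriminant of $q$, regarded as a quadratic in $a$, equals $\Delta=(T_1^{2}-T_1)^{2}-2(T_1^{2}-T_2)T_1^{2}$. Hence $q(a)\ge 0$ for all $a>0$ when $\Delta\le 0$, which is case (i), while for $\Delta>0$ one has $q(a)\ge 0$ precisely for $a$ outside the open interval with endpoints $\frac{T_1-T_1^{2}\pm\sqrt{\Delta}}{T_1^{2}-T_2}$, which is case (ii). I expect the main obstacle to be the sign bookkeeping across the two squaring steps — verifying that $L(a)$ has the right sign so that the square-root manipulations are valid and no spurious branch is introduced — together with disposing separately of the degenerate cases $T_1^{2}=T_2$, where $q$ reduces to a linear function and the root formulas must be interpreted as a limit, and $T_1=1$, where $\rho$ is necessarily a pure product state and the conclusion is immediate.
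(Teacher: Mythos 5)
Your reduction is set up correctly, and it isolates exactly the step on which everything hinges --- but that step fails, and it is not mere sign bookkeeping. Unwinding $V_1(a)\le 1$ gives $\sqrt{q(a)}\le L(a)$ with $L(a)=\frac{a}{2}-\left(1+\frac{a}{2}\right)T_1=\frac{a}{2}(1-T_1)-T_1$, and $L(a)$ is \emph{negative} for all $0<a<\frac{2T_1}{1-T_1}$ (and for every $a>0$ when $T_1=1$). On that range $\sqrt{q(a)}\le L(a)$ cannot hold, so $V_1(a)>1$ there regardless of your key inequality; since both cases (i) and (ii) of the theorem admit arbitrarily small $a>0$, your argument cannot close. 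The obstruction is in the statement itself, not in your strategy: for the separable state $\rho=\frac12\left(|00\rangle\langle00|+|11\rangle\langle11|\right)$ one finds $T_1=\frac12$, $T_2=\frac18$, $\Delta=0$, so case (i) asserts $V_1(a)\le1$ for all $a>0$, yet $q(a)=\frac{(a-2)^2}{16}$ gives $V_1(a)=\sqrt{2/a}>1$ for $0<a<2$. Likewise your remark that $T_1=1$ makes the conclusion ``immediate'' is backwards: a pure product state has $T_1=T_2=1$ and $V_1(a)=\sqrt{1+4/a}>1$ for every $a>0$.

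For comparison, the paper's own proof commits precisely the error your plan would force you to confront: it passes from $\left[\frac{a}{2}\|\rho^R\|^{2}-\left(1+\frac{a}{2}\right)T_1\right]^2\ge F$ to $\frac{a}{2}\|\rho^R\|^{2}-\left(1+\frac{a}{2}\right)T_1\ge\sqrt{F}$ without checking the sign of the left-hand side, which equals $a\sum_{i<j}\sigma_i\sigma_j-T_1$ and is negative for small $a$ (indeed for all $a$ when $\rho^R$ has a single nonzero singular value). Your key inequality $2(T_1^{2}-T_2)\le(1-T_1)^{2}$ is correct and is exactly equivalent to $\Delta\ge0$ (since $\Delta=T_1^{2}\left[(1-T_1)^{2}-2(T_1^{2}-T_2)\right]$); also, after the second squaring the terms linear in $a$ cancel exactly and the comparison collapses to that key inequality itself, not to the inequality with $2T_1(1-T_1)$ on the right that you wrote --- a harmless slip. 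What your argument actually proves is the one-sided statement that for separable $\rho$, whenever $q(a)\ge0$ and $a\ge\frac{2T_1}{1-T_1}$ one has $V_1(a)\le1$; it does not, and cannot, deliver the ranges of $a$ claimed in the theorem.
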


\begin{proof} Suppose
$\rho^R$ has nonzero singular values $\sigma_{1},\sigma_{2},\cdots,\sigma_{s}> 0$. According to the realignment criterion, $\|\rho^R\|=\sum_{i=1}^{s}\sigma_i\leq1$.
\noindent Using (\ref{T1}) and (\ref{T2}), we have
\begin{align}
2\sum_{i<j}\sigma_{i}\sigma_{j}&=\|\rho^R\|^{2}-T_1\leq1-T_1, \label{t1} \\
T_{1}^{2}&=T_{2}+2\sum_{i<j}\sigma_{i}^{2}\sigma_{j}^2.
\end{align}
Now for any real number $a>0$,
\begin{equation}
\begin{split}
(a\sum_{i<j}\sigma_{i}\sigma_{j}-T_{1})^2
&=a^2(\sum_{i<j}\sigma_{i}\sigma_{j})^2-2aT_1(\sum_{i<j}\sigma_{i}\sigma_{j})+T_1^2\\
&\geq a^2(\sum_{i<j}\sigma_{i}^{2}\sigma_{j}^{2})-aT_{1}(1-T_1)+T_1^2\\
&=\left(T_1^2-T_2\right)\frac{a^2}{2}+\left(T_1^2-T_1\right)a+T_1^2, \label{t11}
\end{split}
\end{equation}
where the inequality is due to (\ref{t1}), $a>0$, and the following fact: $\sum_{i=1}^{n}x_i^{2}\leq\left(\sum_{i=1}^{n}x_i\right)^2$ for any nonnegative real numbers $x_1,x_2,\cdots,x_n$.

On the other hand, the left hand side of the first equation in (\ref{t11}) can be written as
\begin{equation}
\begin{split}
(a\sum_{i<j}\sigma_{i}\sigma_{j}-T_{1})^2&=\left[\frac{a}{2}(\|\rho^R\|^{2}-T_1)-T_1\right]^2\\
&=\left[\frac{a}{2}\|\rho^R\|^{2}-(1+\frac{a}{2})T_1\right]^2.\\
\end{split}
\end{equation}
Therefore, we get
\begin{equation}
\left[\frac{a}{2}\|\rho^R\|^{2}-(1+\frac{a}{2})T_1\right]^2\geq\left(T_1^2-T_2\right)\frac{a^2}{2}+
\left(T_1^2-T_1\right)a+T_1^2.\label{13shi}
\end{equation}
Let $F=\left(T_1^2-T_2\right)\frac{a^2}{2}+\left(T_1^2-T_1\right)a+T_1^2$. Next we will search for the range of $a$ which satisfies $F\geq0$. Note that
$F$ can be seen as a quadratic polynomial in the variable $a$, its discriminant  $\Delta=(T_{1}^{2}-T_{1})^2-2(T_{1}^{2}-T_{2})T_1^2$ and the highest
coefficient $T_1^2-T_2>0$, thus,\\
(i)when $\Delta\leq0$, $F\geq0$ always holds true for any $a>0$;\\
(ii)when $\Delta>0$, $F\geq0$ holds true for $0<a\leq \frac{T_1-T_1^2-\sqrt{\Delta}}{T_1^2-T_2}$ or $a\geq\frac{T_1-T_1^2+\sqrt{\Delta}}{T_1^2-T_2}$.\\
Under the condition of $F\geq 0$, we can rewrite \eqref{13shi} as follows.
$$
\frac{a}{2}\|\rho^R\|^{2}-(1+\frac{a}{2})T_1\geq\sqrt{\left(T_1^2-T_2\right)\frac{a^2}{2}+
\left(T_1^2-T_1\right)a+T_1^2},
$$
then,
\begin{equation}
\sqrt{\frac{2}{a}\left[\left(1+\frac{a}{2}\right)T_{1}+\sqrt{\left(T_1^2-T_2\right)\frac{a^2}{2}+
\left(T_1^2-T_1\right)a+T_1^2}\right]}\leq \|\rho^R\| \leq 1, \label{t12}
\end{equation}
\noindent
$ \|\rho^R\| \leq 1$ in \eqref{t12} is due to the realignment criterion. Obviously, the left hand side of (\ref{t12}) is $V_1(a)$. Thus we get $V_1(a)\leq 1$, where $a$ satisfy condition (i) or (ii), which proves Theorem \ref{thm1}.
\end{proof}
 Our criterion says that if the quantity $V_1(a)>1$ still within the range specified for $a$, then the state $\rho$ is entangled.
 We illustrate the efficiency of our criterion by detailed examples.

\textbf{Example 1} Consider the class of NPT entangled states in the $3\otimes3$ dimensional system defined as follows \cite{ref20}:
\begin{equation}
\rho_d=
\begin{pmatrix}
\frac{1-d}{2}&0&0&0&0&0&0&0&\frac{-11}{50}\\
0&0&0&0&0&0&0&0&0\\
0&0&0&0&0&0&0&0&0\\
0&0&0&0&0&0&0&0&0\\
0&0&0&0&\frac{1}{2}-d&\frac{-11}{50}&0&0&0\\
0&0&0&0&\frac{-11}{50}&d&0&0&0\\
0&0&0&0&0&0&0&0&0\\
0&0&0&0&0&0&0&0&0\\
\frac{-11}{50}&0&0&0&0&0&0&0&\frac{d}{2}
\end{pmatrix},\label{li1}
\end{equation}
where $\frac{1}{50}\left(25-\sqrt{141}\right)\leq d\leq \frac{1}{100}\left(25+\sqrt{141}\right)$. By calculating, we get $\triangle<0$. We choose $a=2$, then $V_1>1$ for the range $\frac{1}{50}\left(25-\sqrt{141}\right)\leq d\leq \frac{1}{100}\left(25+\sqrt{141}\right)$. So this family of NPT entangled states can be detected by our entanglement criterion. See Figure \ref{p1}.
\begin{figure}[h]
  \centering
  \includegraphics[width=10cm]{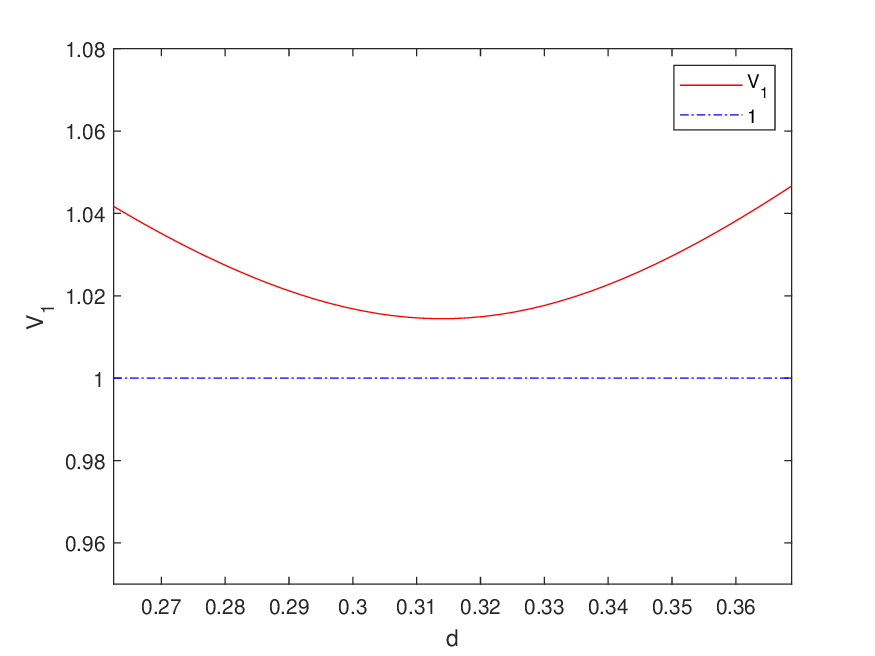}
  \caption{$V_1$ from our Theorem \ref{thm1} when $a=2$ (solid red cure), the x-axis represents the state parameter $d$.}\label{p1}
\end{figure}

\textbf{Example 2} Consider the following class of $3\otimes3$ PPT entangled states \cite{ref21}:
\begin{equation}
\rho_\varepsilon=\frac{1}{N}
\begin{pmatrix}
1&0&0&0&1&0&0&0&1\\
0&1/\varepsilon^2&0&1&0&0&0&0&0\\
0&0&\varepsilon^2&0&0&0&1&0&0\\
0&1&0&\varepsilon^2&0&0&0&0&0\\
1&0&0&0&1&0&0&0&1\\
0&0&0&0&0&1/\varepsilon^2&0&1&0\\
0&0&1&0&0&0&1/\varepsilon^2&0&0\\
0&0&0&0&0&1&0&\varepsilon^2&0\\
1&0&0&0&1&0&0&0&1
\end{pmatrix},
\end{equation}
where $\varepsilon>0$, $\varepsilon\neq1$ and $N=3(1+\varepsilon^2+\frac{1}{\varepsilon^2})$ is the normalization constant.
Direct calculation gives that $\Delta>0$, so if the state is not entangled, then
$V_1\leq 1$ on the range $0<a<0.348$ or $a>7.752$. We find that when $a=0.3$, $V_1>1$ for $\varepsilon>0$, $\varepsilon\neq1$. So Theorem \ref{thm1} can detect $\rho_\varepsilon$ is bound entangled for $\varepsilon>0$, $\varepsilon\neq1$, which is better than the result $0.622496 \leq\varepsilon\leq0.780349$ and $1.281481\leq\varepsilon\leq 1.606435$ given in \cite{ref19}.

\textbf{Example 3} Consider a family of $4\otimes4$ quantum states \cite{ref22},
\begin{equation}
\rho_{p,q}=p\sum_{i=1}^{4}|\psi_i\rangle\langle\psi_i|+q\sum_{i=5}^{6}|\psi_i\rangle\langle\psi_i| \label{li2},
\end{equation}
where $p$ and $q$ are non-negative real numbers satisfying $4p+2q=1$, and ${|\psi_i\rangle}_{i=1}^{6}$ are defined by
\begin{align*}
|\psi_1\rangle &=\frac{1}{\sqrt{2}}\left(|01\rangle+|23\rangle\right),\\
|\psi_2\rangle &=\frac{1}{\sqrt{2}}\left(|10\rangle+|32\rangle\right),\\
|\psi_3\rangle &=\frac{1}{\sqrt{2}}\left(|11\rangle+|22\rangle\right),\\
|\psi_4\rangle &=\frac{1}{\sqrt{2}}\left(|00\rangle-|33\rangle\right),\\
|\psi_5\rangle &=\frac{1}{2}\left(|03\rangle+|12\rangle\right)+\frac{|21\rangle}{\sqrt{2}},\\
|\psi_6\rangle &=\frac{1}{2}\left(-|03\rangle+|12\rangle\right)+\frac{|30\rangle}{\sqrt{2}}.
\end{align*}
When we take $p=\frac{q}{\sqrt{2}}$, the state becomes invariant under the partial transposition. Therefore, for $q_0=\frac{\sqrt{2}-1}{2}$ and $p_0=\frac{1-2q_0}{4}$, the state is PPT state. By calculation, $\Delta=0.0188$, so we focus on the range $0<a<0.203$ or $a>12.378$. Then we find out that when $a=0.2$, $V_1=1.5073$ for the state $\rho_{p_0,q_0}$, which implies $\rho_{p_0,q_0}$ is bound entangled state by our criterion. While the criterion given in \cite{ref16} did not detect this entangled state $\rho_{p_0,q_0}$. \\
When $(p,q)\neq(p_0,q_0)$, $\rho_{p,q}$ is an NPT state. Direct calculation gives that when $0\leq q<0.4609$, $\Delta>0$;  when $q\geq0.4609$, $\Delta<0$, so we need to watch for the range $0<a<0.203$ or $a>12.378$.
However we found that when $a=0.2$, $V_1>1$ in the range $0\leq q\leq\frac{1}{2}$, so Theorem \ref{thm1} implies that $\rho_{p,q}$ is NPT entangled. Thus our criterion detects that the state is an NPT entangled state when $(p,q)\neq(p_0,q_0)$.
The criterion in \cite{ref13} detects $\rho_{p,q}$ is an NPT entangled state in the range $0.425035<q\leq\frac{1}{2}$, the criterion in \cite{ref17} detects $\rho_{p,q}$ is an NPT entangled state in the range $0.00659601<q<0.153105$ and $0.26477<q\leq\frac{1}{2}$. This shows that
our criterion can detect more NPT entangled states. See Figure \ref{p2} for the comparison, the x-axis represents the state parameter $q$ and the y-axis represents the value of function with respect to $q$ in different criteria.\\
\begin{figure}[h]
  \centering
  \includegraphics[width=10cm]{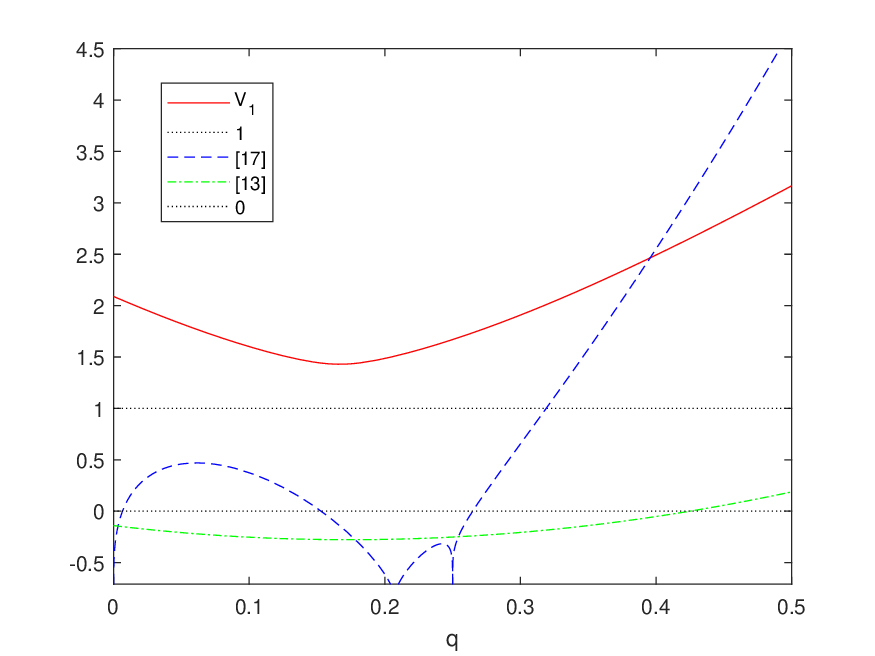}
  \caption{$V_1$ from our result when $a=0.2$ (solid red cure), criterion from Theorem $1$ in \cite{ref17} (dashed blue curve), criterion in \cite{ref13} (dash-dot green curve), the state is entangled when the value (dashed blue curve, dash-dot green curve) greater than zero.}\label{p2}
 \end{figure}

\section{Multipartite entanglement criterion based on realignment moments}
So far, a definition of the multipartite realignment moments has not been established.
In view of the applicability of the entanglement criterion in Sect. 2 for bipartite states,
we propose the notion of multipartite realignment moments and use it to
establish multipartite entanglement criteria for multipartite systems.

Firstly, we review the generalized realignment criterion in multipartite systems introduced in \cite{ref23}: if an $n$-partite density matrix $\rho$ is separable, then
\begin{equation}
\|(R_{(l)}\otimes I_{(n-l)})\rho\|\leq1, l=2,3,\cdots,n ,\label{th31}
\end{equation}
where $R_{(l)}$ refers the realignment operation on the selected $l$ subsystems while keeping the remaining $n-l$ subsystems unchanged.
Next we define the multipartite realignment moments based on (17).

Let $\rho$ be an $n$-partite state over an $d_1 \otimes d_2\otimes \cdots \otimes d_n$ dimensional system, $\rho^{R}=(R_{(l)}\otimes I_{(n-l)})\rho$.
The realignment moments of $\rho$ with respect to the operation $R_{(l)}\otimes I_{(n-l)}$ are defined by:
\begin{equation}
T_{k}^{R}=Tr\left[\left(\rho^{R\dagger}\rho^{R}\right)^k\right], k=1,2,\cdots,d_1\times d_2 \times\cdots \times d_n.
\end{equation}
Let
\begin{equation}
\Delta^R=((T_{1}^R)^{2}-T_{1}^R)^2-2((T_{1}^R)^{2}-T_{2}^R)(T_1^R)^2,
\end{equation}
\begin{equation}
V_2(u)=\sqrt{\frac{2}{u}\left[\left(1+\frac{u}{2}\right)T_{1}^R+\sqrt{\left((T_1^R)^2-T_2^R\right)\frac{u^2}{2}+
\left((T_1^R)^2-T_1^R\right)u+(T_1^R)^2}\right]}, \label{t14}
\end{equation}
where $u>0$.\\
Using a method similar to Theorem 1, we have

\begin{thm}\label{thm2}
If an arbitrary dimensional $n$-partite state $\rho$ is separable, for any operation $R_{(l)}\otimes I_{(n-l)}$ acting on $\rho$, $l=2,3,\cdots,n$, then:\\
(i)when $\Delta^R\leq0$, $V_2(u)\leq1$ holds for $u>0$;\\
(ii)when $\Delta^R>0$, $V_2(u)\leq1$ holds for $0<u\leq\frac{T_1^R-(T_1^R)^2-\sqrt{\Delta^R}}{(T_1^R)^2-T_2^R}$ or $u\geq\frac{T_1^R-(T_1^R)^2+\sqrt{\Delta^R}}{(T_1^R)^2-T_2^R}$.
\end{thm}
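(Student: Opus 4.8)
The plan is to transcribe the proof of Theorem~\ref{thm1} almost verbatim, the only change being that the bipartite realignment criterion is replaced by the generalized multipartite realignment criterion~\eqref{th31}. Fix $l\in\{2,3,\dots,n\}$ and a choice of the $l$ subsystems to be realigned, and put $\rho^R=(R_{(l)}\otimes I_{(n-l)})\rho$ with nonzero singular values $\sigma_1,\dots,\sigma_s>0$. Since $\rho$ is separable, \eqref{th31} gives $\|\rho^R\|=\sum_{i=1}^{s}\sigma_i\le 1$. From here the singular values of $\rho^R$ play exactly the role that the singular values of the bipartite realignment matrix played in Theorem~\ref{thm1}, so everything downstream is a symbol-for-symbol copy with $T_k$, $\Delta$, $V_1$ replaced by $T_k^R$, $\Delta^R$, $V_2$.

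Concretely, I would first record $T_1^R=\sum_i\sigma_i^2$, $T_2^R=\sum_i\sigma_i^4$ and the identities $2\sum_{i<j}\sigma_i\sigma_j=\|\rho^R\|^2-T_1^R\le 1-T_1^R$ and $(T_1^R)^2=T_2^R+2\sum_{i<j}\sigma_i^2\sigma_j^2$. Next, for any $u>0$, I would expand $\bigl(u\sum_{i<j}\sigma_i\sigma_j-T_1^R\bigr)^2$ and bound it below, using $\sum_k x_k^2\le(\sum_k x_k)^2$ for nonnegative reals together with the first identity, to obtain $\bigl(u\sum_{i<j}\sigma_i\sigma_j-T_1^R\bigr)^2\ge F$, where $F=\bigl((T_1^R)^2-T_2^R\bigr)\tfrac{u^2}{2}+\bigl((T_1^R)^2-T_1^R\bigr)u+(T_1^R)^2$. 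Rewriting the same left-hand side as $\bigl[\tfrac{u}{2}\|\rho^R\|^2-(1+\tfrac{u}{2})T_1^R\bigr]^2$ yields $\bigl[\tfrac{u}{2}\|\rho^R\|^2-(1+\tfrac{u}{2})T_1^R\bigr]^2\ge F$.

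Then I would analyze when the quadratic $F$ in the variable $u$ is nonnegative. Its leading coefficient $(T_1^R)^2-T_2^R$ is positive and its discriminant is precisely $\Delta^R$, so $F\ge 0$ for all $u>0$ when $\Delta^R\le 0$, and $F\ge 0$ on $0<u\le\frac{T_1^R-(T_1^R)^2-\sqrt{\Delta^R}}{(T_1^R)^2-T_2^R}$ or $u\ge\frac{T_1^R-(T_1^R)^2+\sqrt{\Delta^R}}{(T_1^R)^2-T_2^R}$ when $\Delta^R>0$. On these ranges, taking the square root on the branch $\tfrac{u}{2}\|\rho^R\|^2-(1+\tfrac{u}{2})T_1^R\ge\sqrt{F}$ and solving for $\|\rho^R\|$ gives $V_2(u)\le\|\rho^R\|$, and combining with $\|\rho^R\|\le 1$ from \eqref{th31} closes both cases.

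Since the derivation copies Theorem~\ref{thm1}, no genuinely new difficulty appears. The one step that deserves care is the choice of sign when passing from $[\cdots]^2\ge F$ to $\tfrac{u}{2}\|\rho^R\|^2-(1+\tfrac{u}{2})T_1^R\ge\sqrt F$, i.e. that one is on the branch where this expression is nonnegative, which is exactly what makes the final inequality point in the direction $V_2(u)\le\|\rho^R\|\le 1$ rather than the reverse. It is also worth stating explicitly that the conclusion holds for every admissible $l$ and every choice of the $l$ realigned subsystems, since \eqref{th31} is a whole family of inequalities; this is what lets the contrapositive ($V_2(u)>1$ on the prescribed range for some $R_{(l)}\otimes I_{(n-l)}$ implies $\rho$ entangled) function as the actual detection test.
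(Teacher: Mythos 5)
Your proposal is correct and is exactly the argument the paper intends: the paper gives no separate proof of Theorem~\ref{thm2}, stating only that it follows ``using a method similar to Theorem 1,'' and your symbol-for-symbol transcription with the generalized multipartite realignment criterion replacing the bipartite one is precisely that method. Your added remarks about the sign of the square-root branch and the quantification over all choices of $l$ and of the realigned subsystems are sensible points of care that the paper leaves implicit.
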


\textbf{Example 4} Consider the tripartite state is defined by
\begin{equation}
\rho_{GHZ,W}=q|GHZ\rangle\langle GHZ|+(1-q)|W\rangle\langle W|,
\end{equation}
where, $0\leq q\leq 1$, $|GHZ\rangle=\frac{1}{\sqrt{2}}\left[|000\rangle+|111\rangle\right]$ and $|W\rangle=\frac{1}{\sqrt{3}}\left[|001\rangle+|010\rangle+|100\rangle\right]$.
This state is entangled in the whole range of $q$  by the criterion in \cite{ref24}. Let $l=2$, then the realignment of the $\rho_{GHZ,W}$ is expressed by $\rho_{GHZ,W}^{R}=\left(R_{(2)}\otimes I_{(1)}\right)\rho_{GHZ,W}$. By calculation, we obtain that $\Delta^{R}<0$, so we choose $u=5$, then find out that
$V_2>1$ for the range of $0\leq q\leq1$. That is, our criterion is able to detect all the entanglement of $\rho_{GHZ,W}$. See Figure \ref{p3}.
\begin{figure}[h]
  \centering
  \includegraphics[width=10cm]{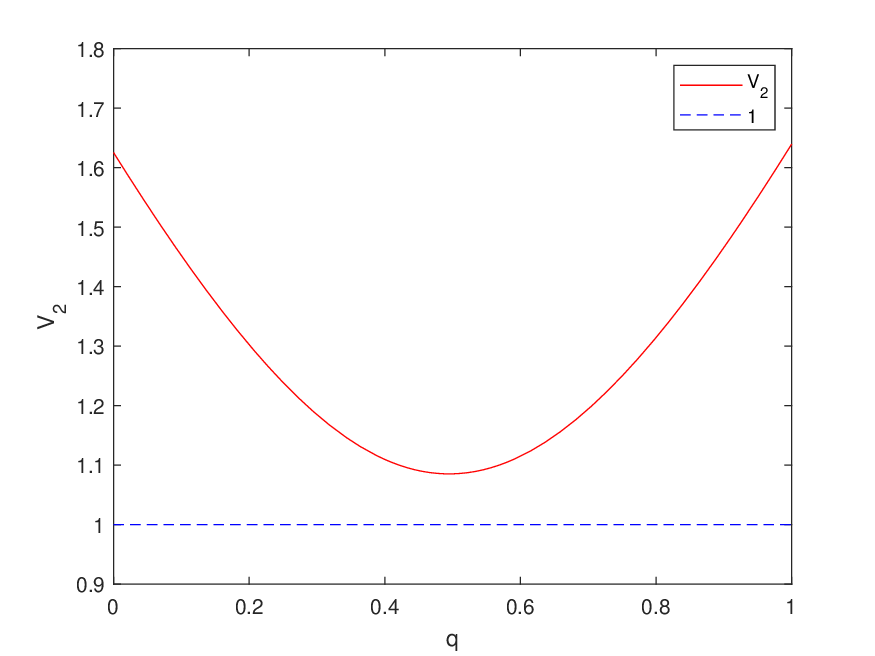}
  \caption{$V_2$ from our Theorem \ref{thm2} for the state $\rho_{GHZ,W}$ when $l=2$ and $u=5$ (solid red line), the x-axis represents the state parameter $q$.}\label{p3}
\end{figure}

We can improve the effectiveness of the criterion by introducing another quantity $V_3(v)$ to obtain the following entanglement criterion.
\begin{thm}\label{thm3}
If a multipartite state $\rho$ is separable, for any operation $R_{(l)}\otimes I_{(n-l)}$ acting on $\rho$, $l=2,3,\cdots,n$, the following inequality
\begin{equation}
V_3(v)=\sqrt{\left(\sqrt{T_1^{R}+\left(v^2+2v\right)T_2^{R}}-v\sqrt{T_2^{R}}\right)^2+\sqrt{2\left((T_1^{R})^2-T_2^{R}\right)}}\leq1,\label{th33}
\end{equation}
holds for $v>0$.
\end{thm}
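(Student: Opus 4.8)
The plan is to carry the proof of Theorem~\ref{thm1} over to the singular-value level, exactly the ``similar method'' already invoked for Theorem~\ref{thm2}. Let $\sigma_1,\dots,\sigma_s>0$ be the nonzero singular values of $\rho^R=(R_{(l)}\otimes I_{(n-l)})\rho$, so $T_k^R=\sum_{i=1}^{s}\sigma_i^{2k}$, and recall that separability of $\rho$ together with the generalized realignment criterion \eqref{th31} gives $\|\rho^R\|=\sum_{i=1}^{s}\sigma_i\le1$. From this single inequality I would extract two facts. First, each $\sigma_i\le1$, hence $\sigma_i^4\le\sigma_i^2$, and summing over $i$ gives $T_2^R\le T_1^R$. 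Second, squaring $\sum_i\sigma_i\le1$ gives $T_1^R+2\sum_{i<j}\sigma_i\sigma_j\le1$, that is, $2\sum_{i<j}\sigma_i\sigma_j\le1-T_1^R$.

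Next I would bound the second term under the outer square root in \eqref{th33}. The identity $(T_1^R)^2=T_2^R+2\sum_{i<j}\sigma_i^2\sigma_j^2$ shows $(T_1^R)^2-T_2^R=2\sum_{i<j}\sigma_i^2\sigma_j^2\ge0$, and then the elementary inequality $\sum_k x_k^2\le(\sum_k x_k)^2$ for nonnegative reals, applied to the numbers $\sigma_i\sigma_j$ with $i<j$, together with the second fact above, yields
\[
\sqrt{2\left((T_1^R)^2-T_2^R\right)}=2\sqrt{\sum_{i<j}\sigma_i^2\sigma_j^2}\le2\sum_{i<j}\sigma_i\sigma_j\le1-T_1^R .
\]

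It then remains to show the first term is at most $T_1^R$, i.e. $\left(\sqrt{T_1^R+(v^2+2v)T_2^R}-v\sqrt{T_2^R}\right)^2\le T_1^R$ for all $v>0$. Since $T_1^R+(v^2+2v)T_2^R\ge v^2T_2^R$, the quantity inside the outer square is nonnegative, so this is equivalent to $\sqrt{T_1^R+(v^2+2v)T_2^R}\le\sqrt{T_1^R}+v\sqrt{T_2^R}$; squaring both nonnegative sides and cancelling common terms reduces it to $2vT_2^R\le2v\sqrt{T_1^R T_2^R}$, which (since $v>0$) is just $T_2^R\le T_1^R$, the first fact above. Combining the two estimates,
\[
V_3(v)^2=\left(\sqrt{T_1^R+(v^2+2v)T_2^R}-v\sqrt{T_2^R}\right)^2+\sqrt{2\left((T_1^R)^2-T_2^R\right)}\le T_1^R+(1-T_1^R)=1,
\]
so $V_3(v)\le1$.

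I do not expect a serious obstacle: the argument is really a reassembly of ingredients already present in Theorem~\ref{thm1}. The two points that need a little care are making sure each squaring step in the third paragraph is a genuine equivalence rather than a one-sided implication (which is why both $v>0$ and $T_2^R\le T_1^R$ must be in hand before squaring), and noting the trivial case $T_2^R=0$, where $\rho^R=0$, all realignment moments vanish and $V_3(v)=0$.
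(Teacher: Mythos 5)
Your proof is correct, and its overall architecture is the same as the paper's: split $V_3(v)^2$ into the two summands under the outer root, show the second summand $\sqrt{2\left((T_1^R)^2-T_2^R\right)}=2\sqrt{\sum_{i<j}\sigma_i^2\sigma_j^2}\le 2\sum_{i<j}\sigma_i\sigma_j=\|\rho^R\|^2-T_1^R$, show the first summand is at most $T_1^R$, and add. (The paper writes this as the single chain $T_1^R+\sqrt{2((T_1^R)^2-T_2^R)}\le\|\rho^R\|^2\le1$, which is the same estimate rearranged.) The one genuine difference is how you bound the first summand: the paper proves $\sqrt{T_1^R}+v\sqrt{T_2^R}\ge\sqrt{T_1^R+(v^2+2v)T_2^R}$ by applying the Minkowski inequality to the vectors $(\sigma_i)$ and $(v\sigma_i^2)$ and then using $\sigma_i^3\ge\sigma_i^4$ (i.e.\ $\sigma_i\le1$) termwise, whereas you square the target inequality directly and reduce it to $T_2^R\le\sqrt{T_1^RT_2^R}$, equivalently $T_2^R\le T_1^R$, which again follows from $\sigma_i\le1$. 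Your route is more elementary and makes the role of $\sigma_i\le1$ transparent; it shows the Minkowski step in the paper is not actually needed. Your explicit attention to the degenerate case $T_2^R=0$ and to the nonnegativity of $\sqrt{T_1^R+(v^2+2v)T_2^R}-v\sqrt{T_2^R}$ before squaring is a point of care the paper glosses over.
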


\begin{proof}
Applying any operation $R_{(l)}\otimes I_{(n-l)}$ to $\rho$, $l=2,3,\cdots,n$, we get $\rho^{R}=(R_{(l)}\otimes I_{(n-l)})\rho$,
Let $\sigma_{1},\sigma_{2},\cdots,\sigma_{t}$ be the $t$ nonzero singular values of $\rho^{R}$, then
\begin{equation}
\begin{split}
T_1^R=\sum_{i=1}^{t}\sigma_i^{2}
&=\left(\sum_{i=1}^{t}\sigma_i\right)^2-2\sum_{i<j}\sigma_i\sigma_j\\
&\leq\left(\sum_{i=1}^{t}\sigma_i\right)^2-2\sqrt{\sum_{i<j}\sigma_i^2\sigma_j^2}\\
&=\left(\sum_{i=1}^{t}\sigma_i\right)^2-\sqrt{2\left(\left(\sum_{i=1}^{t}\sigma_i^2\right)^2-\sum_{i=1}^{t}\sigma_i^4\right)}\\
&=\|\rho^R\|^2-\sqrt{2\left(\left(T_1^R\right)^2-T_2^R\right)},\label{th31}
\end{split}
\end{equation}
where we have used the inequality: $\sum_{i=1}^{n}x_i^{2}\leq\left(\sum_{i=1}^{n}x_i\right)^2$ for non negative real number $x_1,x_2,\cdots,x_n$.\\
According to realignment criterion, $\|\rho^R\|=\sum_{i=1}^{t}\sigma_i \leq1$, since $\sigma_i>0$, thus $\sigma_i\leq1$, $i=1,2,\cdots,t$.\\
For any real number $v>0$,
\begin{equation}
\begin{split}
\left(T_1^R\right)^{\frac{1}{2}}+\left(v^2T_2^R\right)^{\frac{1}{2}}
&=\left(\sum_{i=1}^{s}\sigma_i^{2}\right)^{\frac{1}{2}}+\left(\sum_{i=1}^{s}(v\sigma_i^2)^2\right)^{\frac{1}{2}}\\
&\geq\left(\sum_{i=1}^{s}\left(\sigma_i+v\sigma_i^2\right)^2\right)^{\frac{1}{2}}\\
&\geq\left(\sum_{i=1}^{s}\left(\sigma_i^2+v^2\sigma_i^4+2v\sigma_i^4\right)\right)^{\frac{1}{2}}\\
&=\left(T_1^R+(v^2+2v)T_2^R\right)^{\frac{1}{2}}, \label{th32}
\end{split}
\end{equation}
where we have used the Minkowski inequality:
$\left(\sum_{i=1}^{n}x_i^{p}\right)^{\frac{1}{p}}+\left(\sum_{i=1}^{n}y_i^p\right)^{\frac{1}{p}}\geq(\sum_{i=1}^{n}(x_i+y_i)^p)^{\frac{1}{p}}$,
for $x_i,y_i>0$, and $p\geq1$. \\
Using (\ref{th31}) and (\ref{th32}), we get
\begin{equation}
||\rho^R\| \geq\sqrt{\left(\sqrt{T_1^{R}+\left(v^2+2v\right)T_2^{R}}-v\sqrt{T_2^{R}}\right)^2+\sqrt{2\left((T_1^{R})^2-T_2^{R}\right)}}.
\end{equation}
Combing $\|\rho^R\|\leq1$, we obtain (\ref{th33}). Hence the result is proved.
\end{proof}

\textbf{Example 5} Consider the four-qubit state $\rho\in H_1^{2}\otimes H_2^{2}\otimes H_3^{2}\otimes H_4^{2}$,
\begin{equation}
\rho=\frac{1-x}{16}\mathbb{I}_{16}+x|\psi\rangle\langle\psi|,~~0\leq x\leq1,
\end{equation}
where $|\psi\rangle=\frac{1}{\sqrt{2}}\left(|0000\rangle+|1111\rangle\right)$, $\mathbb{I}_{16}$ is the $16\times16$ identity matrix. Let $l=2$, then we get $\rho^{R}=(R_{(2)}\otimes I_{(2)})\rho$. By our criterion, when $v=0.01$, that $\rho$ is entangled state in the range $0.6427<x\leq1$. The result in \cite{ref25} implies that $\rho$ is entangled (not biseparable) for $\frac{17}{21}(\approx0.8095)<x\leq1$.
Theorem $4$ in \cite{ref26} says that $\rho$ is entangled under the partition $l_1l_2|l_3l_4$
for $0.915<x\leq1$ ($l_i,i=1,\cdots,4$ is the subsystem of $\rho$).
So our criterion determines a better range in detecting entangled states. See Figure \ref{p4} for the comparison, the x-axis represents the state parameter $x$ and the y-axis represents the value of function with respect to $x$ in different criteria.
\begin{figure}[h]
  \centering
  \includegraphics[width=10cm]{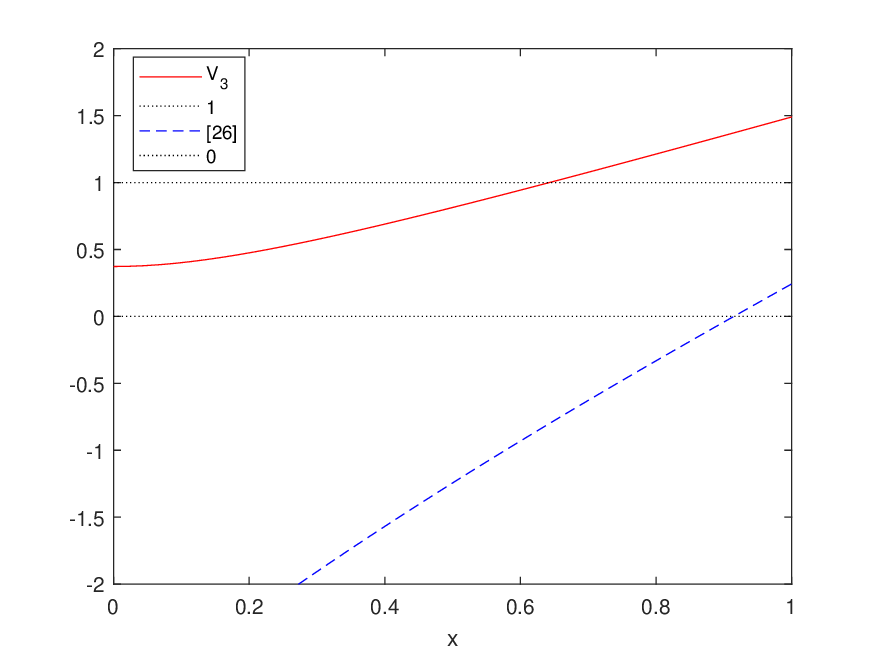}
  \caption{$V_3$ from our Theorem \ref{thm3} for the state $\rho$ when $l=2$ and $v=0.01$ (solid red line), method from Theorem $4$ in \cite{ref26} (dashed blue curve), the state is entangled when the value (dashed blue curve) greater than zero.}\label{p4}
\end{figure}

\section{Conclusion}
In this paper, we have studied the entanglement criteria based on realignment moments for arbitrary dimensional quantum systems. We have proposed parameterized
quantities to detect entanglement in bipartite systems. Detailed examples have demonstrated that our criterion can detect more bound and NPT entangled states by choosing suitable parameter than some of the existing methods.
Moreover, we have proposed a notion of realignment moments of multipartite states to study multipartite entanglement. On the basis of multipartite realignment entanglement criterion and moments, we have presented parameterized multipartite entanglement criteria, with parameters relating to realignment operation and the moments. Compared with previously available criteria, ours can detect more entangled states, and we use detailed examples
to show our findings.

\medskip
\noindent\textbf{ CRediT authorship contribution statement}

{\bf Hui Zhao, Shu-Ying Zhuang:} Formal analysis, Writing, Calculation and figure. {\bf Naihuan Jing, Mei-Ming Zhang:} Writing, review and editing.\\

\noindent\textbf{Declaration of competing interest}

The authors declare that they have no known competing financial interests or personal relationships that could have appeared to
influence the work reported in this paper.\\

\noindent\textbf{Data availability}

No data was used for the research described in the article.\\

\noindent\textbf {Acknowledgements}
This work is supported by the National Key R\&D Program of China under Grant No.
(2022YFB3806000), National Natural Science Foundation of China under Grants (12272011, 12075159,
and 12171044) and the Simons Foundation.

\end{document}